\newcommand{\G}{{\cal G}}
\newcommand{\V}{{\cal V}}
\newcommand{\E}{{\cal E}}
\newcommand{\X}{{\cal X}}
\newcommand{\A}{{\cal A}}
\newcommand{\Z}{{\mathbb Z}}
\newcommand{\var}{ {\rm Var}}
\newcommand{\pb}{p_{ {\rm B}}}
\newcommand{\pbu}{f_{ {\rm B}}}
\newcommand{\zot}{Z^{ {\rm OT}}}
\newcommand{\zunif}{Z^{ {\rm U}}}
\newcommand{\Lunif}{L_{ {\rm U}}}
\newcommand{\Runif}{R_{ {\rm U}}}
\newcommand{\Lot}{L_{ {\rm OT}}}
\newcommand{\Rot}{R_{ {\rm OT}}}
\newtheorem{proposition}{Proposition}
\newtheorem{lemma}{Lemma}
\begin{document}
\bibliographystyle{IEEEtran}
\title{On Stochastic Estimation of Partition Function}
\author{
 \IEEEauthorblockN{Ali Al-Bashabsheh \hspace{1.5cm} Yongyi Mao}
				\IEEEauthorblockA{School of Electrical Engineering and Computer Science \\
						 University of Ottawa, Canada\\
						 \{aalba059, yymao\}@eecs.uottawa.ca}
}
\maketitle
\pagestyle{plain}
\begin{abstract}
In this paper, we show analytically that the duality of normal factor graphs (NFG) can facilitate stochastic estimation
of partition functions. In particular, our analysis suggests that for the $q-$ary two-dimensional nearest-neighbor Potts
model, sampling from the primal NFG of the model and sampling from its dual exhibit opposite behaviours with respect to
the temperature of the model.  For high-temperature models, sampling from the primal NFG gives rise to better estimators
whereas for low-temperature models, sampling from the dual gives rise to better estimators. This analysis is validated
by experiments. 
\end{abstract}

\section{Introduction}

The estimation of partition function for statistical models is of fundamental importance in statistical physics, machine
learning and information theory \cite{Wainright:Bounds, Neal:mcm}.  The models we consider in this
paper are specified by a collection of random variables $\{X_i:i=1, 2, \ldots, N\}$, for some positive integer ${N}$;
each random variable $X_i$ is assumed to take values (often called \emph{spins}) from some finite set $\X$; every configuration $x\in \X^{N}$ is
associated with an energy level $E(x)$, and the joint distribution of random variables
$\{X_i:i = 1, 2, \ldots, N\}$ is modelled as the Boltzmann distribution
\begin{equation}
\pb(x):=\frac{e^{-\beta E(x)}}{Z},
\label{eq:model}
\end{equation}
for all $x \in \X^{N}$.
In (\ref{eq:model}), $\beta:=\frac{1}{kT}$ is often referred to as the ``inverse temperature'',
where $T$ is the temperature and $k$ is the Boltzmann constant, and the normalizing constant
$Z:=\sum_{x\in \X^{N}}e^{-\beta E(x)}$ is known as the \emph{partition function}. 

Given $\beta$ and the energy function $E(\cdot)$, exact computation of the partition function $Z$ for systems involving
a large number of random variables is known to be intractable, and it is precisely the intractability of this problem
that roots the hardness of various problems in coding and information theory (e.g., determining the capacity of
constrained codes).  Developing bounding techniques (e.g. \cite{Wainright:Bounds}) and
approximation methods \cite{Potamianos:1997} for estimating the partition functions is thus an active area of
research. 

This work is motivated by the recent empirical observation of \cite{Mehdi:Ising} that for the  
two-dimensional nearest-neighbor Ising model (binary spins), 
the duality of normal factor graphs (NFG) \cite{NFG:hol} appears to facilitate
the estimation of the partition function.
In particular, they experimentally show that for large $\beta$, two stochastic estimation methods
(the Ogata-Tanemura method \cite{Ogata:estimator} based on Gibbs sampling and a method based on uniform sampling)
provide better estimation of the partition function when sampling from the dual NFG compared to sampling from the primal NFG.

In this paper, we explain the behaviour observed in \cite{Mehdi:Ising} and show both analytically and experimentally
that such a trend extends beyond the Ising model to $q$-ary spins, i.e., the standard Potts model \cite{Wu:Potts}.
Along
our development, we also provide insights on the question for what other two-dimensional nearest-neighbor models such
behaviour may hold.

The remainder of this paper is organized as follows.  In Section \ref{sec:prelim}, we precisely state the model
considered in this paper and present the NFG representation of the model and the duality result therein
\cite{NFG:hol}.  A concise review of two stochastic estimation methods  and the approach of \cite{Mehdi:Ising} are
also given in Section \ref{sec:prelim}.  Section \ref{sec:converge} presents an analysis of the convergence behaviour of
the two methods, suggesting that in high $\beta$ regime, sampling from the dual NFG model performs better whereas in
low $\beta$ regime, sampling primal NFG performs better. The analysis is supported by the experimental results presented
in Section \ref{sec:experiments}. The paper is concluded in Section \ref{sec:conclude}, where we extend the results  beyond the Potts model.

\section{Preliminaries}

\label{sec:prelim}

\subsection{Model}
\label{subsec:model}

In Equation (\ref{eq:model}), we consider that each index in $\{1, 2, \ldots, N\}$ corresponds to a grid point in an
$L\times L$ square lattice. 
We assume that the lattice is ``wrapped around'' in the sense that the left-most point of each row is connected to the right-most point of
the same row and the top-most point of each column is connected to the bottom-most point of the same column. Let ${\cal
A}$ denote the set of all pairs of adjacent lattice points. The energy function is assumed to take the form
\begin{equation}
\label{eq:energy}
E(x):=-\sum_{\{i, j\} \in \A} g_{ij}(x_i,x_j),
\end{equation}
for a collection of functions $\{g_{ij}: (i,j)\in {\cal A}\}$. 
Such a model is referred to as a two-dimensional nearest-neighbor model. 

We will further assume that the alphabet $\X$ is the abelian group $\Z_{q}:=\{0, \cdots, q-1\}$ and that
\begin{equation}
\label{eq:g}
g_{ij}(x, x')=g(x,x'):=\left\{ \begin{array}{rc} 1, & x = x' \\ -1, & x \neq x'. \end{array} \right.
\end{equation}

Equations (\ref{eq:model}) to (\ref{eq:g})  define a (two-dimensional nearest-neighbor) Potts model.%
\footnote{
We slightly deviate from the traditional definition of the Potts model where the function $g$ is usually
assumed to take the value $0$ instead of $-1$. Without altering the nature of the problem, 
this choice of function $g$ includes the Ising model as the special case of
$q=2$. 
}
(Some authors use the term \emph{standard} Potts model to make explicit the distinction from the ``clock'' model.)
To facilitate later discussions,  we use $\pbu$ to denote $e^{-\beta E(x)}$ in (\ref{eq:model}) and refer to it as the ``unnormalized Boltzmann distribution''.

\subsection{NFG Representation and Duality}

A normal factor graph (NFG) $\G$ is a graph $(\V,\E)$ where each edge $e \in \E$ is
associated a variable $x_e$, and each vertex $v \in \V$ is associated a local function $f_v(x_{E(v)})$, where $E(v)$ is the
set of edges incident with $v$, and for any set ${\cal A}$, $x_{ {\cal A}}:=\{x_a:a\in {\cal A}\}$.
Let $\X_{\G}$ be the support of of the function defined as the multiplication of all local functions, and let
$f_{\G}$ be the restriction of such function to $\X_{\G}$. Further, we define $Z_{\G}$ as the sum of $f_{\G}$ over
$\X_{\G}$, and write $p_{\G}:=f_{\G}/Z_{\G}$. Note that if all the local functions are nonnegative, then $p_{\G}$ is a
probability distribution over $\X_{\G}$. In this case, in alignment with the previous discussions, we refer to
$p_{\G}$, $f_{\G}$, and
$Z_{\G}$ as the distribution, unnormalized distribution, and partition function of the NFG, respectively. We note that
the above definitions of NFG and related terms deviate slightly from those in \cite{NFG:hol}.  This is to simplify our
presentation and exclude the concepts irrelevant to this paper.


It is natural to associate with the model defined in Section \ref{subsec:model} an NFG 
as in Fig.~\ref{fig:pnfg1} (wrapping around is not shown). In the figure, 
each function marked by ``$=$" is an ``equality indicator function'', namely, a function that evaluates to $1$ if all 
it arguments are equal and evaluates to $0$ otherwise;  each equality indicator function corresponds to a random variable in the model.
The function $h$ in the figure is defined by $h(x, x'):=e^{\beta g(x, x')}$. It is not hard to see that the unnormalized distribution, distribution and partition functions associated with this NFG are 
respectively $f_B$, $p_B$ and $Z$ of the model defined by equations (\ref{eq:model}), (\ref{eq:energy}) and (\ref{eq:g}).


Noting that function $g$ only depends on the difference between its arguments, we may express $h$ 
by $h(x, x'):=\kappa(x-x')$, where
\begin{eqnarray}
	\kappa(x) = \left\{ \begin{array}{lc} e^{\beta}, & x = 0 \\ e^{-\beta}, & x \neq 0. \end{array} \right.
	\label{eq:ham}
\end{eqnarray}

Using function $\kappa$, the NFG in Fig.~\ref{fig:pnfg1} may be converted to the NFG in Fig.~\ref{fig:pnfg2} without changing its unnormalized distribution, distribution and partition function.  This latter NFG, which we denote by ${\cal G}$ is in fact preferred in the context of this paper, since the results of this paper depend crucially on a property of $\kappa$, which will become clear momentarily.


\begin{figure}[ht]
\def\s{0.82}	
\def\c{.56}	
\def\L{3}	
\centering
\begin{tikzpicture}[scale=\s, every node/.style={draw, fill=white, minimum size=3mm, inner sep =.2mm}]
    \draw (-.8,-1.8) grid[step=2cm] ({2*(\L+1)-.1},{2*\L+.8});
\small
\node[draw=none, label=below:$h$] at (1,6){}; \node[draw=none, label=below:$h$] at (3,6){}; \node[draw=none, label=right:$h$] at (0,5){};
\scriptsize
    \node[draw=none] at (-.5,6.25){$X_1$}; \node[draw=none] at (1.5,6.25){$X_2$}; \node[draw=none] at (5.5,0.25){$X_{16}$};
    \foreach \x in {0,...,\L}{
      \foreach \y in {0,...,\L}{
        \node at (2*\x,2*\y){$=$};
      }
    }
    \foreach \x in {0,...,\L}{
    \foreach \y in {0,...,\L}{
		\node at (2*\x,2*\y-1){$$};
      }
    }
    \foreach \x in {0,...,\L}{
      \foreach \y in {0,...,\L}{
		\node at (2*\x+1,2*\y){$$};
      }
    }
	\end{tikzpicture}
	\caption{An NFG representing the model specified by  (\ref{eq:model}), 
	(\ref{eq:energy}) and (\ref{eq:g}).}
	\label{fig:pnfg1}
\end{figure}
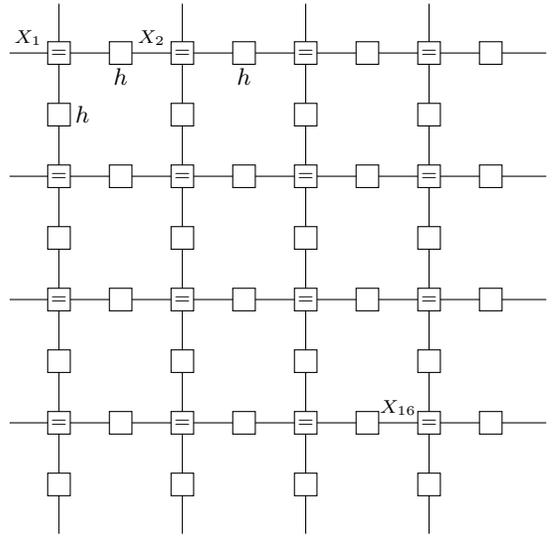

\begin{figure}[ht]
\def\s{0.8}	
\def\c{.56}	
\def\L{3}	
\centering
\begin{tikzpicture}[scale=\s, every node/.style={draw, fill=white, minimum size=3mm, inner sep =.2mm}]
    \small
    \node[draw=none, label=right:$\kappa$] at (1,6.5){}; \node[draw=none, label=right:$\kappa$] at (3,6.5){}; \node[draw=none,
    label=above:$\kappa$] at (.5,5){};
    \draw (-.8,-1.8) grid[step=2cm] ({2*(\L+1)-.1},{2*\L+.8});
    \foreach \x in {0,...,\L}{
      \foreach \y in {0,...,\L}{
        \node at (2*\x,2*\y){$=$};
      }
    }
    \foreach \x in {0,...,\L}{
    \foreach \y in {-0,...,\L}{
		\draw (2*\x,2*\y-1) -- (2*\x+\c,2*\y-1);
		\node[minimum size=1mm] at (2*\x,2*\y-0.5){\tiny $+$};
		\node at (2*\x,2*\y-1){$+$};
		\node at (2*\x+\c,2*\y-1){};
      }
    }
    \foreach \x in {0,...,\L}{
      \foreach \y in {0,...,\L}{
		\draw (2*\x+1,2*\y) -- (2*\x+1,2*\y+\c);
		\node[minimum size=1mm] at (2*\x+0.5,2*\y){\tiny $+$};
		\node at (2*\x+1,2*\y){$+$};
		\node at (2*\x+1,2*\y+\c){};
      }
    }
	\end{tikzpicture}
	\caption{The NFG $\G$.}
	\label{fig:pnfg2}
\end{figure}

It is possible to introduce duality to NFG via the Fourier transform. Briefly, the Fourier transform of any function $f$ on ${\mathbb Z}_q^m$ is another function $\widehat{f}$ defined 
on ${\mathbb Z}_q^m$. In particular, the Fourier transform of an equality indicator function is, up to scale, a ``parity-check'' indicator function, namely a function that evaluates to $1$ if its argument sums to $0$ and evaluates to $0$ otherwise.  A parity-check indicator function is marked by ``$+$'' in an NFG.  Further, the Fourier transform of the function $\kappa$ is 
\begin{eqnarray}
\label{eq:kappa_ft}
	\widehat{\kappa}(x) = \left\{ \begin{array}{lc} e^{\beta}+(q-1)e^{-\beta}, & x = 0 \\ e^{\beta}-e^{-\beta}, & x
		\neq 0. \end{array} \right.
	\label{eq:dham}
\end{eqnarray}

Given an NFG, the dual NFG may be obtained by converting each local function to its Fourier transform and then inserting
a parity-check indicator function to each edge.  It can then be verified that 
the dual NFG of ${\cal G}$ is the NFG ${\cal G}'$ shown in Fig.~\ref{fig:dnfg}.  A duality theorem (generalized Holant
theorem) of NFG \cite{NFG:hol} states, in the context of our model, that
$Z_{\G'}=Z_{\G}/q^{N}$.

\subsection{Estimating Partition Function by Sampling NFG}

Given an NFG $\G$ representing a statistical model, its partition function $Z_{\G}$ may be estimated via evaluating its
unnormalized distribution $f_{\G}$ at a set of configurations $Y_1, Y_2, \ldots, Y_M$ randomly drawn from $\X_{\G}$.
If these configurations are obtained by sampling the distribution $p_{\G}$ (which in practice can be done by Gibbs
sampling), then 
the Ogata-Tanemura (OT) \cite{Ogata:estimator} estimator can be defined as
\begin{eqnarray}
	\zot_{\G}(M):=\frac{|\X_{\G}|}{\frac{1}{M}\sum\limits_{i=1}^{M}\frac{1}{f_{\G}(Y_{i})}},
\label{eq:zot}
\end{eqnarray}
If these samples are drawn uniformly from $\X_{\G}$, an estimator, which we call the ``uniform estimator'', can be defined as
\begin{eqnarray}
	\zunif_{\G}(M):=\frac{|\X_{\G}|}{M}\sum_{i=1}^{M}f_{\G}(Y_{i}),
\label{eq:zunif}
\end{eqnarray}
It can be shown that as $M$ increases, both $\zot_{\G}(M)$ and $\zunif_{\G}(M)$ converges to $Z_{\G}$. 

Given the NFG $\G$  in Fig. \ref{fig:pnfg2} that represents the Potts model, it is easy to see that every local function
in the dual NFG $\G'$ in Fig. \ref{fig:dnfg} is non-negative. The dual NFG $\G'$ may then be regarded also as a statistical
model and the above two estimators may be used to estimate the partition function $Z_{\G'}$, a scaled version of
$Z_{\G}$. This technique was first used in \cite{Mehdi:Ising} for Ising model (Potts model with $q=2$), where the authors
show empirically that at high temperature, both OT estimator and uniform estimator 
give more accurate estimates on the dual NFG.


\begin{figure}[ht]
\def\s{0.8}	
\def\c{.56}	
\def\L{3}	
\centering
\begin{tikzpicture}[scale=\s, every node/.style={draw, fill=white, minimum size=3mm, inner sep =.2mm}]
	\small
	\node[draw=none, label=right:$\widehat{\kappa}$] at (1,6.5){}; \node[draw=none, label=right:$\widehat{\kappa}$] at (3,6.5){}; \node[draw=none,
	label=above:$\widehat{\kappa}$] at (.5,5){};
	\draw (-.8,-1.8) grid[step=2cm] ({2*(\L+1)-.1},{2*\L+.8});
    \foreach \x in {0,...,\L}{
      \foreach \y in {0,...,\L}{
        \node at (2*\x,2*\y){$+$};
      }
    }
    \foreach \x in {0,...,\L}{
    \foreach \y in {-0,...,\L}{
		\draw (2*\x,2*\y-1) -- (2*\x+\c,2*\y-1);
		\node[minimum size=1mm] at (2*\x,2*\y+0.5){\tiny $+$};
		\node at (2*\x,2*\y-1){$=$};
		\node at (2*\x+\c,2*\y-1){};
      }
    }
    \foreach \x in {0,...,\L}{
      \foreach \y in {0,...,\L}{
		\draw (2*\x+1,2*\y) -- (2*\x+1,2*\y+\c);
		\node[minimum size=1mm] at (2*\x-.5,2*\y){\tiny $+$};
		\node at (2*\x+1,2*\y){$=$};
		\node at (2*\x+1,2*\y+\c){};
      }
    }
  \end{tikzpicture} 
	\caption{The dual NFG $\G'$.}
	\label{fig:dnfg}
\end{figure}

\section{Convergence Behaviour of the Estimators}
\label{sec:converge}
Our analysis is primarily based on bounding the respective variances of the logarithm of the estimators for large $M$, as for any given number $M$ of samples, such variance is an indicator of the 
estimation accuracy. Our development is largely in line with that of \cite{Potamianos:1997}.


Given a statistical model NFG ${\cal G}$, it is possible to show 
\begin{eqnarray}
	\lim_{M\rightarrow \infty}M \var[\log(\zot_{\G}(M))] \hspace{-.6cm}&&=
	\frac{Z_{\G}^{2}}{|\X_{\G}|^{2}}\var\big[\frac{1}{f_{\G}(Y_1)}\big] \nonumber \\ && =
	\frac{Z_{\G}}{|\X_{\G}|^{2}}\sum_{x\in \X_{\G}}\frac{1}{f_{\G}(x)}-1.
\label{eq:zotvar}
\end{eqnarray}
\begin{comment}
\begin{proof}
	Let
	\[X_{M}:=\frac{\frac{1}{M}\sum\limits_{i=1}^{M}\frac{1}{f_{\G}(Y_{i})}}{|\X_{\G}|},\]
	then
	\[E[X_{M}]=\frac{1}{|\X_{\G}|}E[\frac{1}{f_{\G}(Y_{1})}]=\frac{1}{|\X_{\G}|}\sum_{x}\frac{p_{\G}(x)}{f_{\G}(x)}=\frac{1}{Z_{\G}},\]
	and
	\[\var[X_{M}]=\frac{1}{M|\X_{\G}|^{2}}\var[\frac{1}{f_{\G}(Y_{1})}].\]
	From (\ref{eq:zot}) we can rewrite $\log(\zot_{\G})$ as
	\[
	\log(\zot_{\G}) = g(X_{M}), 
	\]
	where $g(x):=\log(\frac{1}{x})=-\log(x)$, and so $g'(x)=\frac{-1}{x}$.
	Using Taylor expansion of $g$ at $E[X_{M}]$, 
	\[
	\log(\zot_{\G})\simeq g(E(X_{M})) + g'(E(X_{M}))(X_{M}-E(X_{M})),
	\]
	and so
	\begin{eqnarray*}
	\var[\log(\zot_{\G})]&& \simeq \big(g'(E(X_{M}))\big)^{2}\var[X_{M}] \\
	&& = \frac{1}{(E[X_{M}])^{2}} \var[X_{M}] \\
	&& = \frac{Z_{\G}^{2}}{M|\X_{\G}|^{2}} \var[\frac{1}{f_{\G}(Y_{1})}] \\
	\end{eqnarray*}
The approximation is only valid in the limit, as $g$ may be approximated as a linear function only when the
variance of $X_{M}$ is small.
This method of first order approximation is often referred to as the \emph{delta method}. For a more rigorous discussion 
on the delta method, see e.g. \cite[Theorem~5.5.24]{Casella:Inference}.
\end{proof}
\end{comment}
Similarly, it can be shown that
\begin{eqnarray}
	\lim_{M\rightarrow \infty} M\var[\log\zunif_{\G}(M)] \hspace{-.6cm}&&=
	\frac{|\X_{\G}|^{2}}{Z^{2}_{\G}}\var[f_{\G}(Y_1)] \nonumber \\ && =
	\frac{|\X_{\G}|}{Z_{\G}^{2}}\sum_{x\in \X_{\G}}f_{\G}^{2}(x) - 1.
\label{eq:zunifvar}
\end{eqnarray}


From this, the following proposition can be proved.

\begin{proposition}
When sampling the NFG $\G$ of the Potts model, 
\begin{eqnarray*}
&&\Lot(\beta) \leq \lim_{M\rightarrow \infty}M \var[\log(\zot_{\G}(M))] \leq \Rot(\beta),  \\
&&\Lunif(\beta) \leq \lim_{M\rightarrow \infty}M \var[\log(\zunif_{\G}(M))] \leq \Runif(\beta), 
\end{eqnarray*}
where
\begin{eqnarray*}
	&& \Lot(\beta) :=\frac{e^{2N\beta}}{|\X_{\G}|^{2}}-1, \Rot(\beta):=e^{4N\beta}-1,\\
	&&\hspace{-.8cm} \Lunif(\beta) :=\frac{|\X_{\G}|}{(q+(|\X_{\G}|-q)e^{-8\beta})^{2}}-1, \Runif(\beta):=e^{8N\beta}-1. 
\end{eqnarray*}
\label{prop:p}
\end{proposition}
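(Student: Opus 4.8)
The plan is to reduce everything to elementary bounds on the three sums $Z_\G$, $\sum_x 1/f_\G(x)$, and $\sum_x f_\G^2(x)$ that appear in the limiting-variance expressions (\ref{eq:zotvar}) and (\ref{eq:zunifvar}). First I would record the combinatorial description of $f_\G$ on the $\G$ of Fig.~\ref{fig:pnfg2}: the wrapped lattice is a torus on $N$ sites with exactly $2N$ adjacent pairs, so $|\X_\G|=q^{N}$, and $f_\G(x)=\prod_{\{i,j\}\in\A}\kappa(x_i-x_j)=e^{\beta(2N-2b(x))}$, where $b(x)$ denotes the number of adjacent pairs with $x_i\neq x_j$. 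Hence $e^{-2N\beta}\le f_\G(x)\le e^{2N\beta}$, the maximum $e^{2N\beta}$ being attained exactly at the $q$ constant configurations ($b=0$). Everything then follows by substituting appropriate extreme-value estimates into (\ref{eq:zotvar}) and (\ref{eq:zunifvar}).

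With this in hand the two upper bounds are immediate. For $\Rot$ I would bound $Z_\G\le |\X_\G|e^{2N\beta}$ and $\sum_x 1/f_\G(x)\le |\X_\G|e^{2N\beta}$ and plug into the second line of (\ref{eq:zotvar}), the two factors of $|\X_\G|$ cancelling to leave $e^{4N\beta}$. For $\Runif$ I would combine $\sum_x f_\G^2(x)\le |\X_\G|e^{4N\beta}$ with the matching lower bound $Z_\G\ge |\X_\G|e^{-2N\beta}$ (every term of $Z_\G$ is at least $e^{-2N\beta}$) in (\ref{eq:zunifvar}), which leaves $e^{8N\beta}$.

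For the OT lower bound I would instead keep a single well-chosen term in each factor: $Z_\G\ge e^{2N\beta}$ from one constant configuration, and $\sum_x 1/f_\G(x)\ge 1$. The latter holds because some configuration has $b(x)\ge N$ disagreeing edges (a column-striped or checkerboard pattern realizes this; alternatively a uniformly random configuration has expected $b=2N(q-1)/q\ge N$, so the maximum is at least $N$), and for that configuration $1/f_\G(x)=e^{\beta(2b(x)-2N)}\ge 1$. Multiplying the two estimates gives $\Lot$.

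The uniform lower bound $\Lunif$ is the crux and the only place a nontrivial fact enters. Here I would write $Z_\G=e^{2N\beta}\sum_x e^{-2\beta b(x)}$ and split the sum into the $q$ constant configurations (each contributing $1$) and the remaining $|\X_\G|-q$ configurations. The key observation is that any non-constant configuration has $b(x)\ge 4$: picking a value $v$ that occurs and setting $S=\{i:x_i=v\}$, every edge of the boundary $\partial S$ is disagreeing, so $b(x)\ge|\partial S|$, and on the $4$-regular torus the minimum edge boundary of a nonempty proper vertex subset is $4$ (realized by a single site, and no smaller by the $4$-edge-connectivity of the torus). Thus $e^{-2\beta b(x)}\le e^{-8\beta}$ off the constant configurations, giving $Z_\G\le e^{2N\beta}\big(q+(|\X_\G|-q)e^{-8\beta}\big)$. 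Combining this with $\sum_x f_\G^2(x)\ge e^{4N\beta}$ (again the single largest term) in (\ref{eq:zunifvar}) and cancelling the $e^{4N\beta}$ factors yields $\Lunif$. I expect this edge-isoperimetric estimate — that a single spin flip already creates four disagreements, so the unnormalized weight drops by a factor at least $e^{-8\beta}$ — to be the main obstacle, since the other three bounds are one-line extreme-value arguments.
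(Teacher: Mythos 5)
Your proof is correct and takes essentially the same route as the paper's: the same extreme-value bounds on $Z_{\G}$, $\sum_x 1/f_{\G}(x)$, and $\sum_x f_{\G}^2(x)$ substituted into (\ref{eq:zotvar}) and (\ref{eq:zunifvar}), including the key refinement for $\Lunif$ of keeping the $q$ constant configurations and bounding every other term of $Z_{\G}$ by the second-largest value $e^{(2N-8)\beta}$. The only difference is that you make explicit two facts the paper treats as immediate --- the existence of a configuration with at least $N$ disagreeing edges (the paper simply declares $\sum_x 1/f_{\G}(x)\ge 1$ trivial, noting the checkerboard argument for even grids) and the edge-isoperimetric fact that any non-constant configuration has at least $4$ disagreeing edges.
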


We remark that
the bounds presented in the proposition above (and later in
Proposition~\ref{prop:d}) can be loose for some values of $\beta$. 
However, they suffice to explain the behaviour of the estimators on the
primal and dual NFGs.

When $\beta$ is small, say, in the order of $N^{-m}$ for $m>1$,  both upper bounds $\Rot$ and $\Runif$ in the
proposition approach zero with increasing $N$.  In this regime both estimators provide good estimates
of the partition function, without requiring asymptotically large $M$.

For large $\beta$, however, both estimators are inefficient. In particular, when $\beta>\log q$,  the lower bound $\Lot$
grows exponentially in $N$, which requires $M$ to be at least exponential in $N$ in order for the variance to be bounded
within a constant.  Similarly, when $\beta>\frac{\log q}{8}N$, the lower bound $\Lunif$ also grows exponentially in $N$,
making the uniform estimator inefficient. This is a rather exaggerated value of $\beta$, and we refer the reader to
\cite{Mackay:MC} for a better discussion on why the uniform estimator is inefficient for large $\beta$.
%

To get a better idea on relative performance between the OT and uniform estimators for large
$\beta$, note that
\[
\lim_{M\rightarrow \infty} M \var[\log(\zunif_{\G}(M))] \leq |\X_{\G}|-1,
\]
which follows immediately from the fact that $\sum_{x}f^{2}_{\G}(x) \leq Z_{\G}^{2}$. Comparing this upper bound with
the lower bound $\Lot$, there exists $\beta_{0}:=\frac{3}{2}\log(q)$ above which the uniform estimator is more efficient than
the OT estimator. 
This is in fact Theorem~2 of \cite{Potamianos:1997} for the model in this work.



On the dual side, we have the following bounds.

\begin{proposition}
For any integer $k$, let $A_{k,\beta}:=1+(k-1)e^{-2\beta}$, and let
$r(\beta):= \frac{A_{q,\beta}}{A_{0,\beta}}$.
When sampling the dual NFG $\G'$ for the Potts model (with $N$ being an even number), 
\begin{eqnarray*}
&&\Lot'(\beta) \leq \lim_{M\rightarrow \infty}M \var[\log(\zot_{\G'}(M))] \leq \Rot'(\beta),  \\
&&\Lunif'(\beta) \leq \lim_{M\rightarrow \infty}M \var[\log(\zunif_{\G'}(M))] \leq \Runif'(\beta), 
\end{eqnarray*}
where
\begin{eqnarray*}
	&& \Lot'(\beta) := \frac{r^{2N}(\beta)}{|\X_{\G'}|^{2}} -1, 
	\Rot'(\beta):= r^{2N}(\beta) - 1, \\
	&& \Lunif'(\beta) :=\frac{|\X_{\G'}|}{\big(q+(|\X_{\G'}|-q)A_{0,\beta} \big)^{2}} -1, \\
	&& \Runif'(\beta):= r^{4N}(\beta) - 1. 
\end{eqnarray*}
\label{prop:d}
\end{proposition}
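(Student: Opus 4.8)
The plan is to prove Proposition~\ref{prop:d} in close analogy with Proposition~\ref{prop:p}, since the limiting-variance formulas (\ref{eq:zotvar}) and (\ref{eq:zunifvar}) hold for \emph{any} statistical-model NFG and in particular for the dual $\G'$. Everything reduces to (i) sharp pointwise bounds on $f_{\G'}$ and $1/f_{\G'}$, and (ii) converting these into bounds on the three sums $Z_{\G'}=\sum_{y}f_{\G'}(y)$, $\sum_{y}1/f_{\G'}(y)$, and $\sum_{y}f_{\G'}^{2}(y)$ occurring in (\ref{eq:zotvar}) and (\ref{eq:zunifvar}). First I would note that each of the $2N$ edge-functions of $\G'$ is $\widehat{\kappa}$, and that by (\ref{eq:kappa_ft}) we have $\widehat{\kappa}(0)=e^{\beta}A_{q,\beta}$ and $\widehat{\kappa}(x)=e^{\beta}A_{0,\beta}$ for $x\neq 0$; hence, writing $n_{0}(y)$ for the number of zero edge-variables of a configuration $y\in\X_{\G'}$,
\[
f_{\G'}(y)=e^{2N\beta}\,A_{q,\beta}^{\,n_{0}(y)}\,A_{0,\beta}^{\,2N-n_{0}(y)}.
\]
Since $A_{q,\beta}\geq A_{0,\beta}>0$, this is maximized by the all-zero assignment and minimized when every edge is nonzero, giving $e^{2N\beta}A_{0,\beta}^{2N}\leq f_{\G'}(y)\leq e^{2N\beta}A_{q,\beta}^{2N}$ and, by inversion, the corresponding bounds on $1/f_{\G'}(y)$. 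The key observation is that the ratio $f_{\G'}^{\max}/f_{\G'}^{\min}$ equals exactly $r^{2N}(\beta)$.

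For the two upper bounds $\Rot'$ and $\Runif'$ I would use only these pointwise estimates, mirroring the primal computation. Combining $Z_{\G'}\leq|\X_{\G'}|f_{\G'}^{\max}$ with $\sum_{y}1/f_{\G'}(y)\leq|\X_{\G'}|/f_{\G'}^{\min}$ gives $\frac{Z_{\G'}}{|\X_{\G'}|^{2}}\sum_{y}1/f_{\G'}(y)\leq r^{2N}(\beta)$, hence $\Rot'$; and combining $\sum_{y}f_{\G'}^{2}(y)\leq|\X_{\G'}|(f_{\G'}^{\max})^{2}$ with the trivial $Z_{\G'}\geq|\X_{\G'}|f_{\G'}^{\min}$ gives $\frac{|\X_{\G'}|}{Z_{\G'}^{2}}\sum_{y}f_{\G'}^{2}(y)\leq r^{4N}(\beta)$, hence $\Runif'$.

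For the two lower bounds I must retain the extremal terms. For $\Lot'$ I would keep the all-zero configuration in $Z_{\G'}$ (so $Z_{\G'}\geq f_{\G'}^{\max}$) and, crucially, a configuration with every edge nonzero in $\sum_{y}1/f_{\G'}(y)$ (so $\sum_{y}1/f_{\G'}(y)\geq 1/f_{\G'}^{\min}$); their product is $Z_{\G'}\sum_{y}1/f_{\G'}(y)\geq f_{\G'}^{\max}/f_{\G'}^{\min}=r^{2N}(\beta)$, and dividing by $|\X_{\G'}|^{2}$ yields $\Lot'$. For $\Lunif'$ I would keep the all-zero term in $\sum_{y}f_{\G'}^{2}(y)$ (so $\sum_{y}f_{\G'}^{2}(y)\geq(f_{\G'}^{\max})^{2}$) and bound $Z_{\G'}$ by retaining the $q$ largest summands and bounding each of the remaining $|\X_{\G'}|-q$ summands by $f_{\G'}^{\max}A_{0,\beta}$ --- legitimate because any non-maximal $y$ has $n_{0}(y)\leq 2N-1$, whence $f_{\G'}(y)\leq f_{\G'}^{\max}(A_{0,\beta}/A_{q,\beta})\leq f_{\G'}^{\max}A_{0,\beta}$. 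This gives $Z_{\G'}\leq f_{\G'}^{\max}\big(q+(|\X_{\G'}|-q)A_{0,\beta}\big)$ and hence $\Lunif'$, dualizing exactly the sharpened upper bound on $Z_{\G}$ that underlies the primal case.

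The hard part, and the place where the hypothesis that $N$ is even is invoked, is the combinatorial content behind these extremal terms. Unlike the primal case, where the configurations range over all of $\Z_{q}^{N}$ and the extremal spin patterns are transparent, the dual configurations are confined to $\X_{\G'}$, the set of edge-assignments meeting the parity check at every one of the $N$ grid vertices (the $\Z_{q}$ cycle space of the wrapped lattice). I therefore have to \emph{exhibit} the extremal members of $\X_{\G'}$: the all-zero assignment is trivially admissible and realizes $f_{\G'}^{\max}$, but for $\Lot'$ I must produce an admissible assignment with no zero edge realizing $f_{\G'}^{\min}$ --- for instance a common nonzero value on all horizontal edges and a suitably chosen nonzero value on all vertical edges, so that every vertex check cancels --- and it is this construction, paralleling the even-grid (checkerboard) argument available in the primal setting, that the parity of $N$ is meant to guarantee on the torus. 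Once admissibility and weights of these extremal configurations are settled, substituting the resulting sum-bounds into (\ref{eq:zotvar}) and (\ref{eq:zunifvar}) leaves only the same routine algebra as in Proposition~\ref{prop:p}.
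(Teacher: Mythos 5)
Your proposal is correct and follows essentially the same route as the paper's own proof: the same pointwise bounds $A_{0,\beta}^{2N}e^{2N\beta}\leq f_{\G'}(y)\leq A_{q,\beta}^{2N}e^{2N\beta}$, the same retention of extremal configurations for the two lower bounds, the same sharpened upper bound on $Z_{\G'}$ (keeping the $q$ largest summands and bounding the remaining ones by the second-largest attainable value), and the same substitution into (\ref{eq:zotvar}) and (\ref{eq:zunifvar}). The only real difference is that you exhibit explicitly the admissible all-nonzero-edge configuration underlying the lower bound on $\sum_{x}1/f_{\G'}(x)$, which the paper merely asserts to exist when $N$ is even; your constant-on-horizontal-edges, constant-on-vertical-edges flow is indeed admissible (and, incidentally, works for any grid size, so it shows the evenness hypothesis is not actually needed for that step).
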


When $\beta$ is large, namely in the order of $\log(N)$, both upper bounds $\Rot'$ and $\Runif'$ in the proposition
approach zero with increasing $N$. In this regime both estimators provide good estimate of the partition function,
without requiring asymptotically large $M$.

For small $\beta$, however, both estimators are inefficient. In particular, for $\beta <
\frac{1}{2}\log\big(\frac{2q-1}{q-1}\big)$, the lower bound $\Lot'$ grows exponentially in $N$, which requires
$M$ to be at least exponential in $N$ in order for the variance to be bounded within a constant.
Similarly, since $A_{0,\beta}$ approaches zero when $\beta$ approaches zeros, $\Lunif'$ becomes exponential in
$N$.


Similar to the remark following Proposition~\ref{prop:p}, 
comparing $|\X_{\G'}|+1$ with the lower bound $\Lot'$, it follows that there exists
$\beta'_{0}:=\frac{1}{2}\log(1+\frac{q}{q^{2}-1})$ \emph{below} which the uniform estimator is more efficient than
the OT estimator. 

At this end, we have shown that on the dual NFG, the two estimators behave in an opposite trend (in $\beta$) to that on
the primal NFG. It appears that such a phenomenon may  fundamentally be related to a ``duality'' between ``nearly
uniform'' and ``nearly concentrated'' distribution. More precisely, when both an NFG and its dual involve only
non-negative local functions, they both can be associated with a Boltzmann distribution. If one of the distributions is
``nearly uniform'', the other one is necessarily ``nearly concentrated'', namely, assigning most of the probability mass
to only a few configurations. It is well-known in physics literature that the ``near uniformity'' and ``near
concentratedness'' correspond respectively to high-temperature and low-temperature systems respectively.  It appears
that these sampling based estimators usually work well for high-temperature systems and work poorly for low-temperature
systems. Taking an NFG to its dual, essentially reverts the ``temperature''.



\section{Experiments}
\label{sec:experiments}

In this section we provide experimental results for the Potts model with $q=4$ and grid-size $N = 10\times 10$.
We use the Gibbs sampling algorithm
\cite{Newman:MC}
on the primal and dual NFG to obtain samples from $p_{\G}$ and $p_{\G'}$, respectively.
%
We estimate the log partition function per site (i.e. $\frac{1}{N}\log(Z)$), where
depending on whether the primal or the dual NFG is used, the estimate of the partition function, which
depends on the number of samples $M$, is defined as 
$\hat{Z}(M):=\zot_{\G}(M)$ and $\hat{Z}(M):=q^{N}\zot_{\G'}(M)$, respectively.
(Similar definitions are used for the uniform estimator.)
For any number of samples $M$, we repeat the experiment $30$ times and record the value of
$\frac{1}{N}\log(\hat{Z}_{i}(M)), i=1,\cdots,30$, where for each trial $i$, the initial configuration is chosen independently
and according to the uniform distribution.
The ``quality'' of the estimation at any $M$ is decided based on the standard deviation of the trials from their mean
(with respect to the uniform distribution on the set $\{1, \cdots, 30\}$).

Figs~\ref{fig:b1p2p} and \ref{fig:b1p2d} show the estimated log partition function per site, i.e.,  $\log(\hat{Z}(M))/N$, for
the low temperature $\beta=1.2$
Fig.~\ref{fig:b1p2p} shows the estimation based on the primal
NFG using both the uniform estimator (left) and the OT estimator (right). 
Using up to $10^{6}$ samples, both estimators fail to converge, and so do not provide a good estimation.
This can also be seen in the dashed lines in Fig.~\ref{fig:b1p2std} showing the standard deviation of the uniform estimator (left) and
the OT estimator (right), where the standard deviation in both cases remains high.
In contrast, Fig.~\ref{fig:b1p2d} shows fast convergence of the estimators on the dual NFG. The
standard deviation of the estimations obtained from the dual NFG is shown using the solid lines in Fig.~\ref{fig:b1p2std}.
Fig.~\ref{fig:bp18} shows the standard deviation of the estimations for the high temperature of $\beta=0.18$. 
In this case estimations based on the primal NFG have a lower standard deviation compared to the dual NFG, and so
provide a better estimation.
In Fig.~\ref{fig:bv}~(a), showing the standard deviation versus $\beta$ using uniform sampling, one observes the behaviour 
of the estimator versus $\beta$ as discussed in Section~\ref{sec:converge}.

\begin{figure*}[ht]
	\centering
	\def\s{.48}
	{\includegraphics[scale=\s]{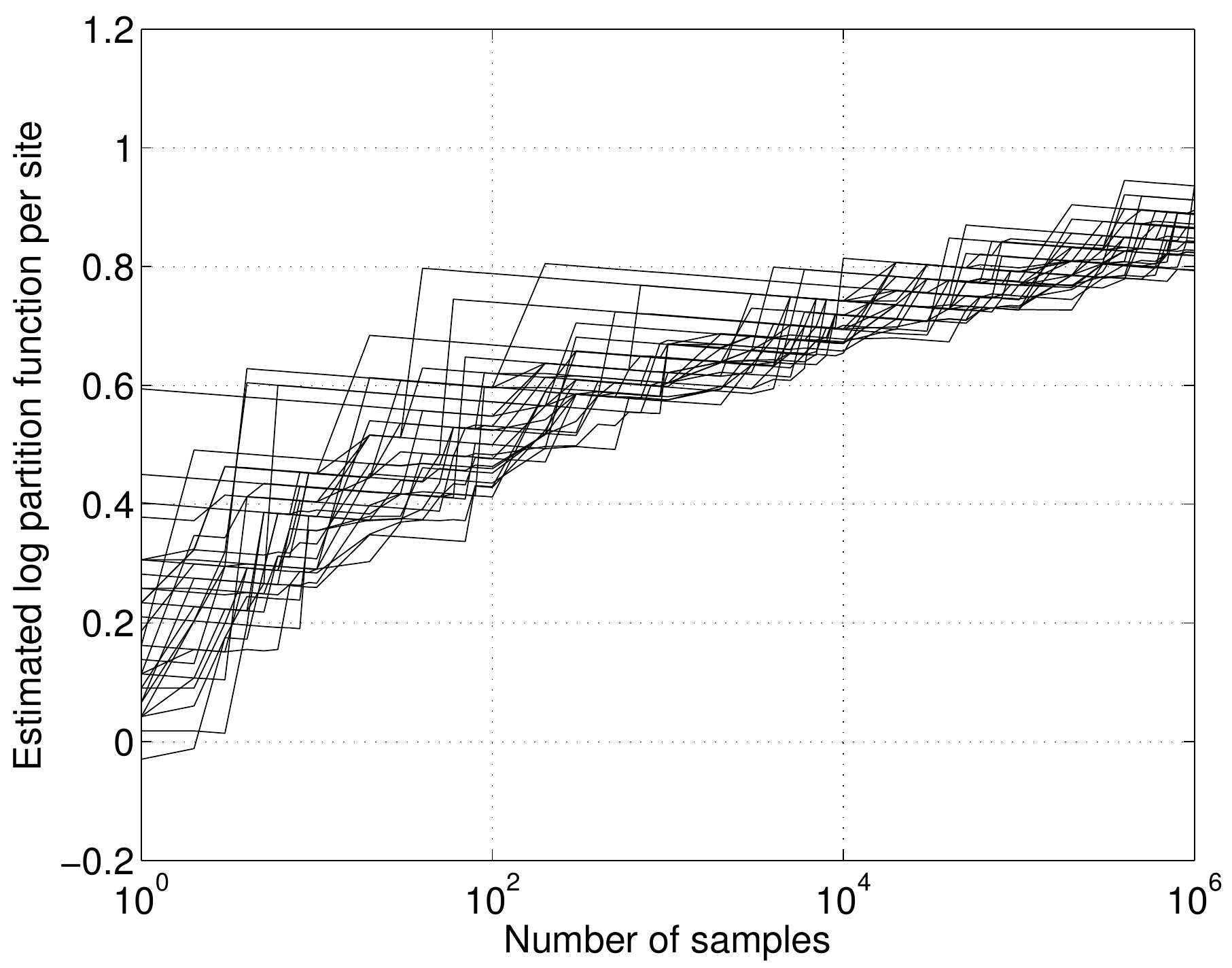}		
      \includegraphics[scale=\s]{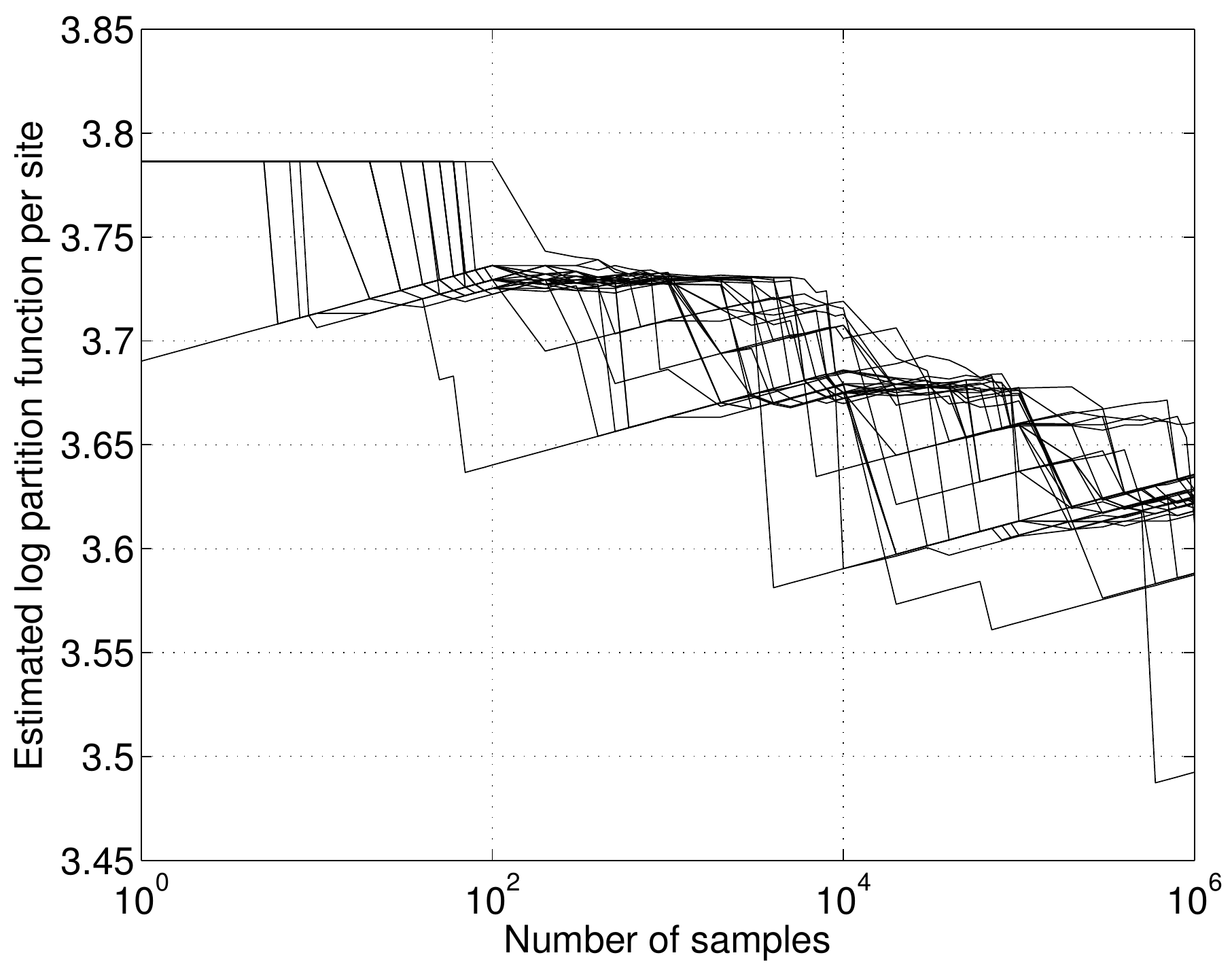}		}
	\caption{Potts model at low temperature $\beta=1.2$ using the primal NFG. The two figures show the estimated log partition function per site
	versus the number of samples using the uniform estimator (left) and the OT estimator (right), where each line represents a trial. }
	\label{fig:b1p2p}
 \end{figure*} \begin{figure*}[ht] \def\s{.48} \centering
	{\includegraphics[scale=\s]{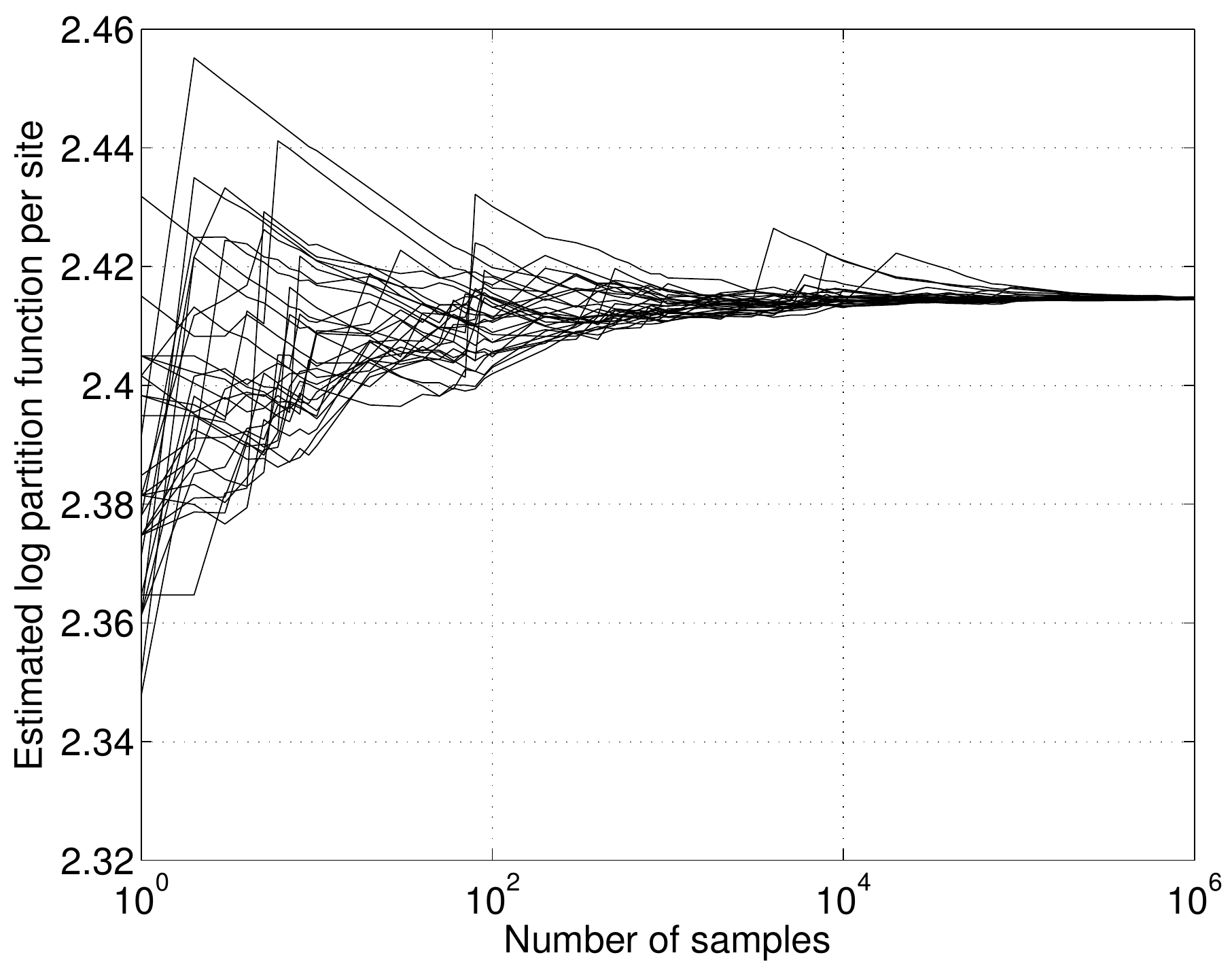}		
      \includegraphics[scale=\s]{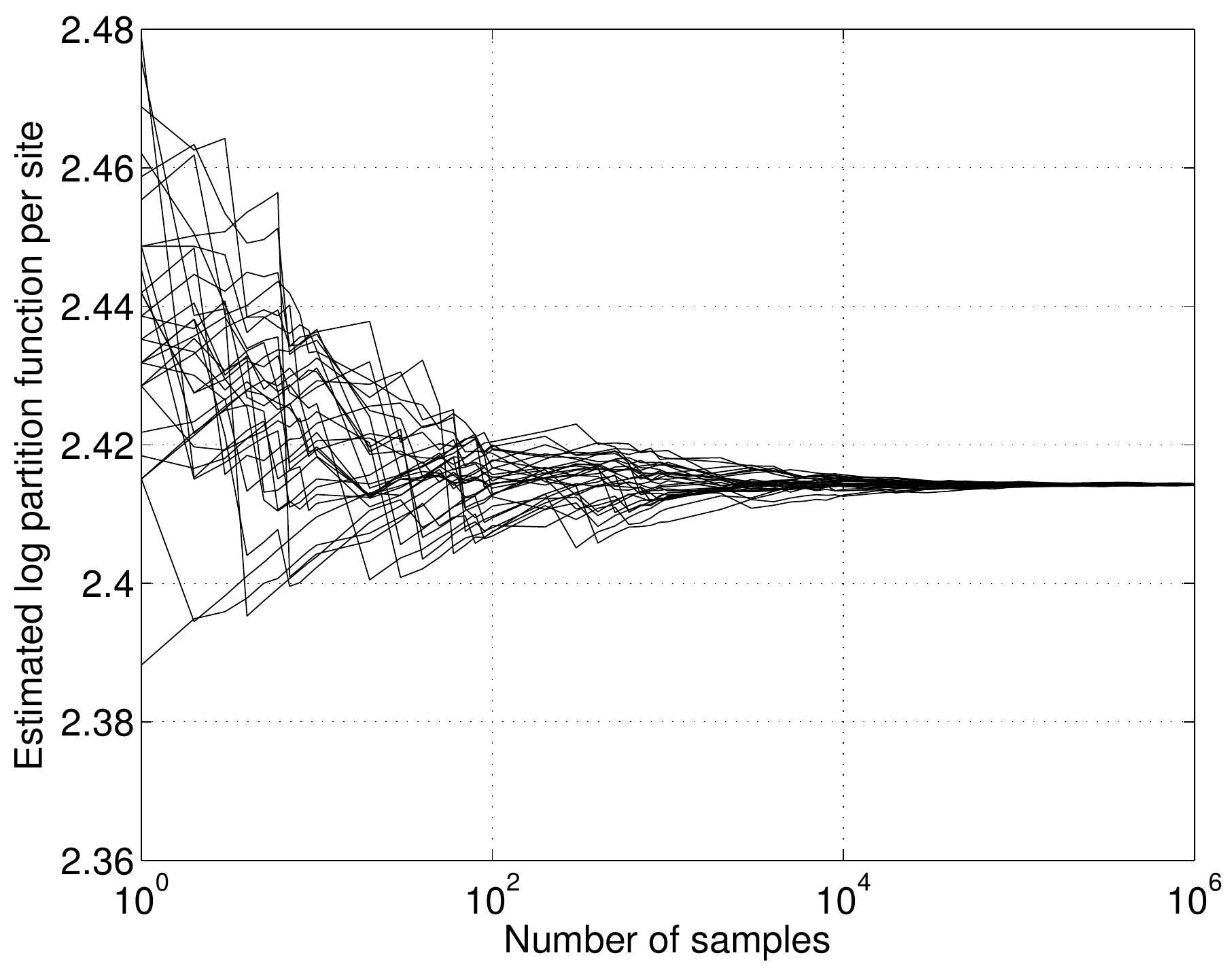}		}
	\caption{Potts model at low temperature $\beta=1.2$ using the dual NFG. The two figures show the estimated log partition function per site
	versus the number of samples using the uniform estimator (left) and the OT estimator (right), where each line represents a trial. }
	\label{fig:b1p2d}
 \end{figure*} \begin{figure*}[ht] \def\s{.48} \centering
	{\includegraphics[scale=\s]{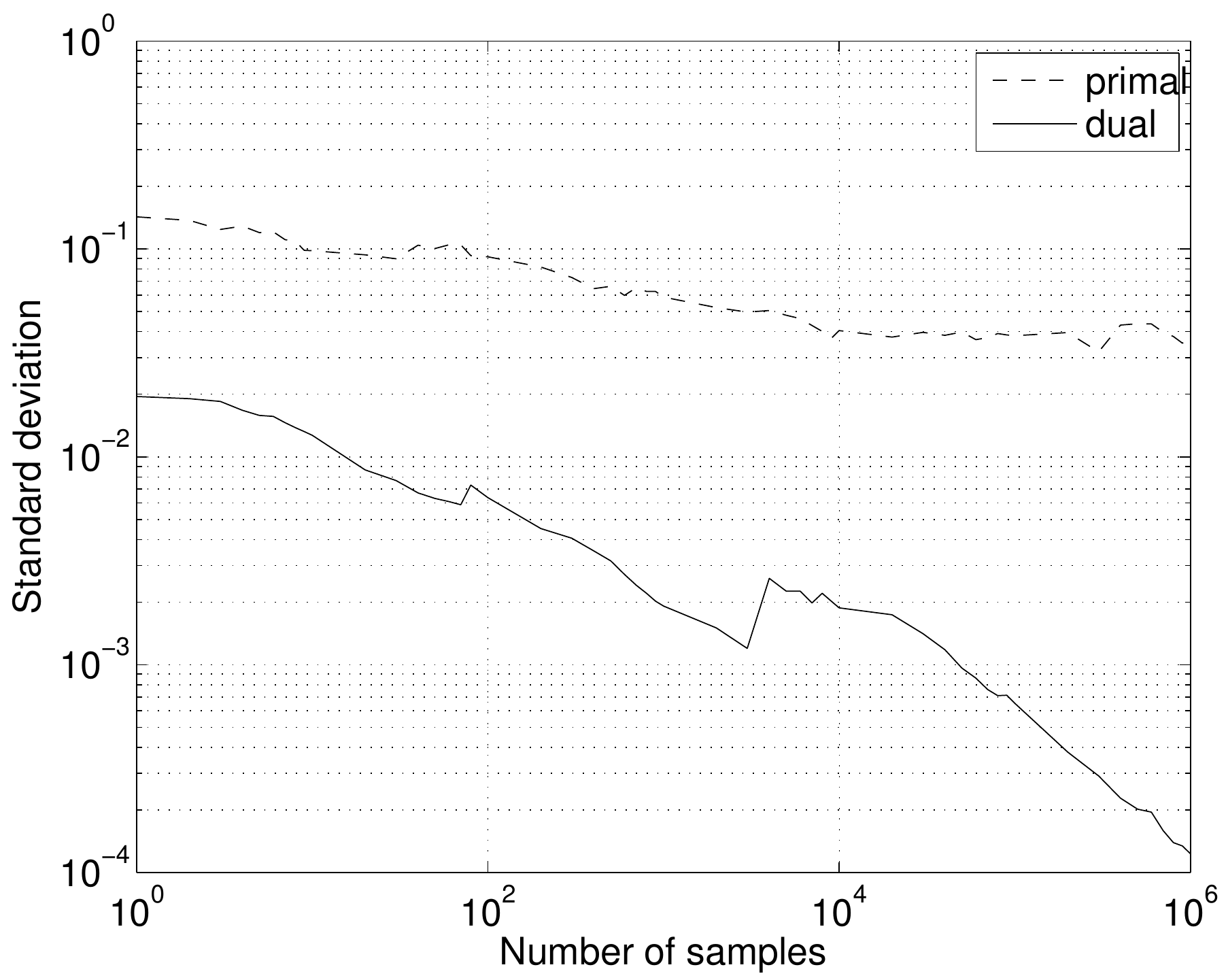}		
      \includegraphics[scale=\s]{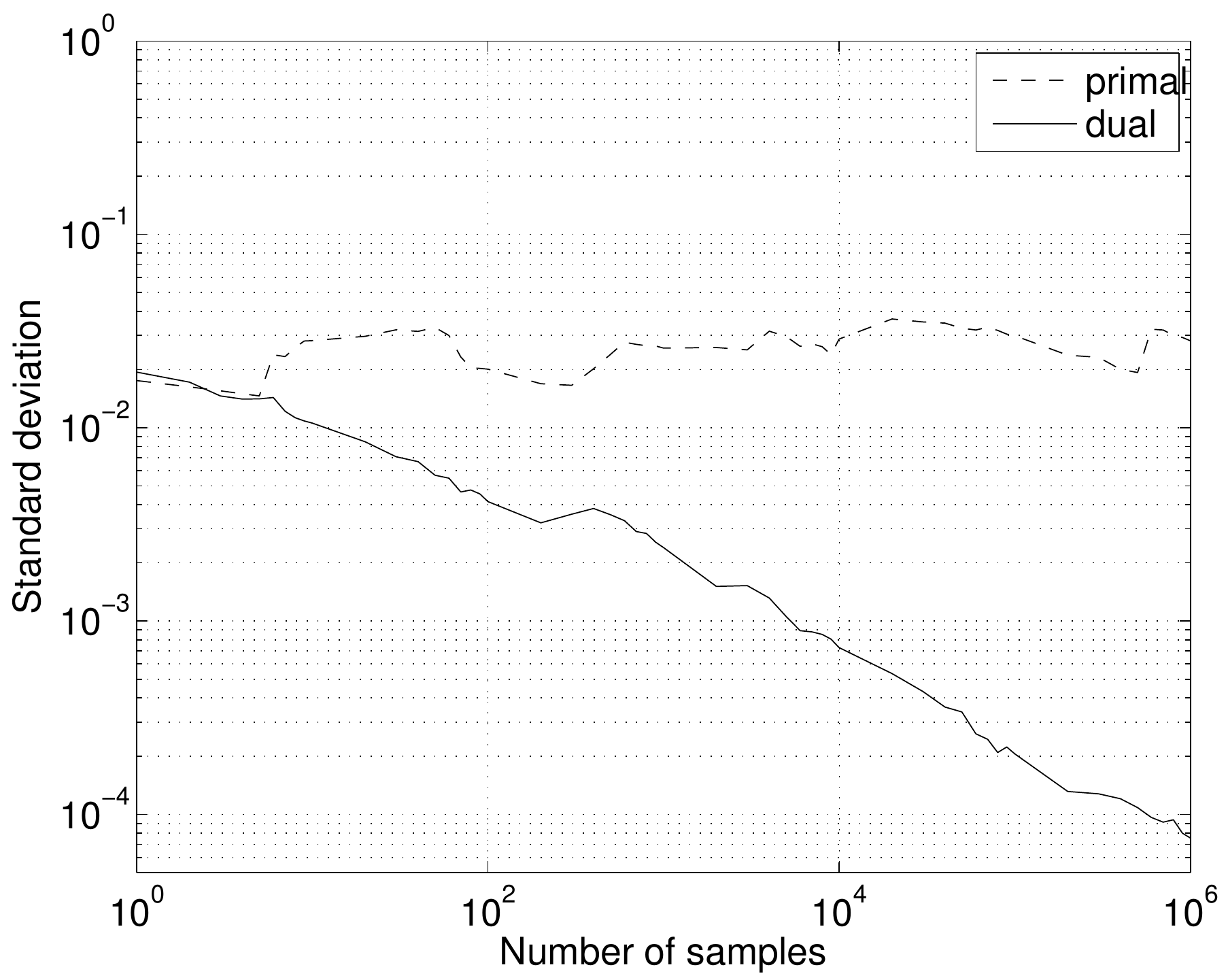}		}
	\caption{
	Potts model at low temperature $\beta=1.2$, where the standard deviation of the estimated log partition function per site
	is shown for the uniform (left) and OT (right) estimators based on the primal (dashed line) and dual (solid line)
	NFGs.}
	\label{fig:b1p2std}
\end{figure*}

\begin{figure*}[ht]
	\centering
	\def\s{.48}
	{\includegraphics[scale=\s]{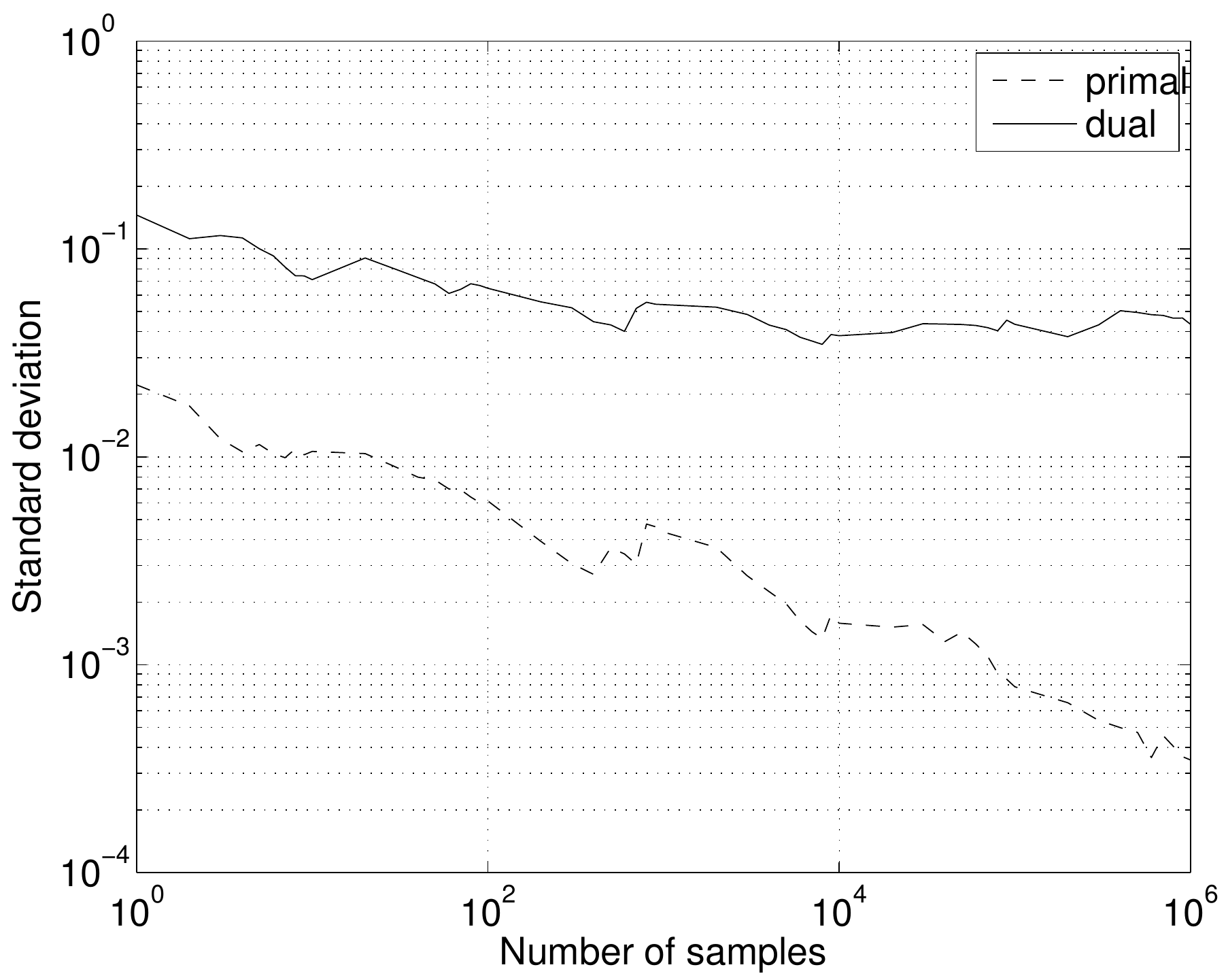}		
      \includegraphics[scale=\s]{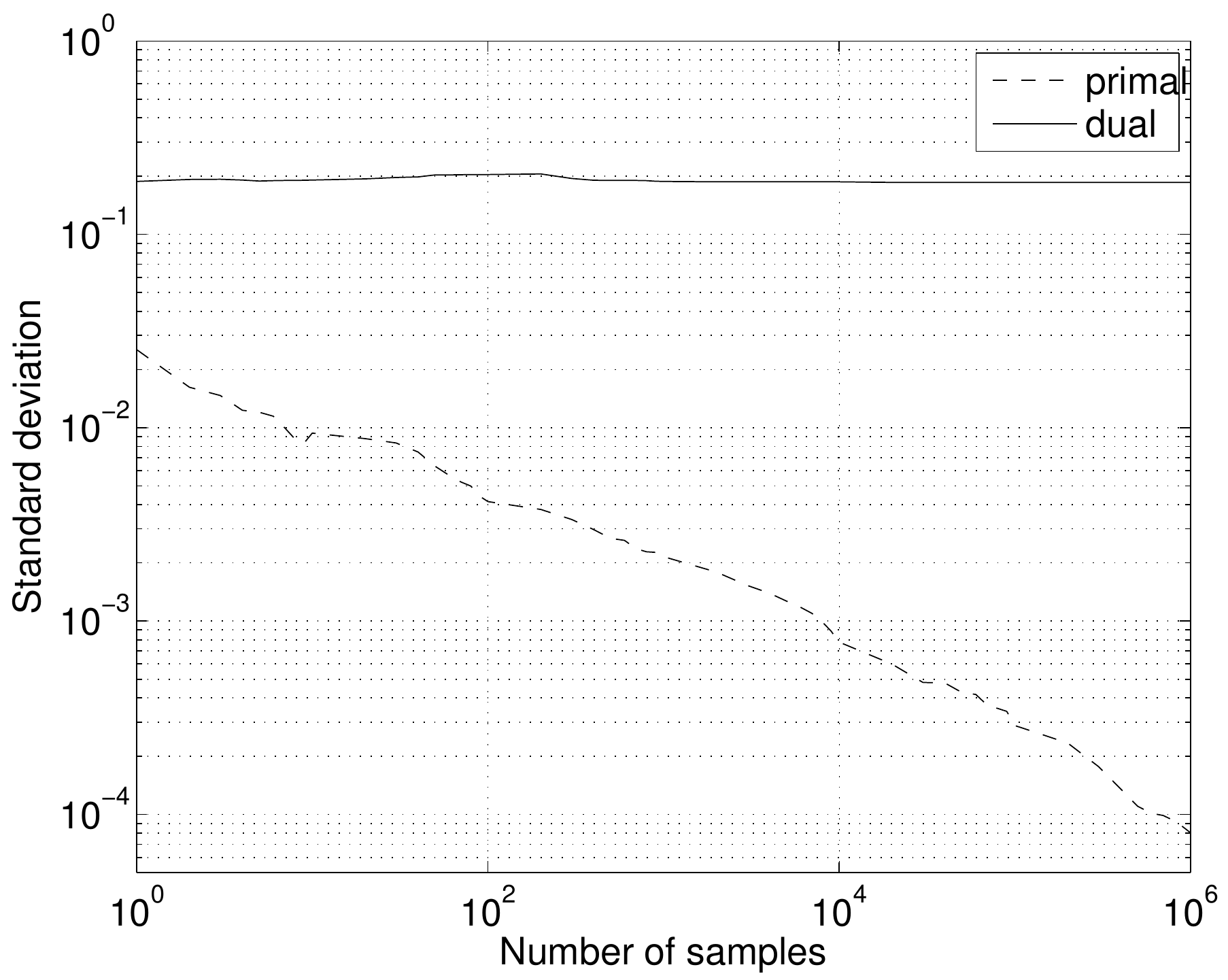}
	 }
	\caption{
	Potts model at high temperature $\beta=0.18$, where the standard deviation of the estimated log partition function per site
	is shown for the uniform (left) and OT (right) estimators based on the primal (dashed line) and dual (soled line)
	NFGs.}
	\label{fig:bp18}
\end{figure*}

\begin{figure*}[ht]
	\centering
	\def\s{.48}
	\subfigure[Potts model.]{
	\includegraphics[scale=\s]{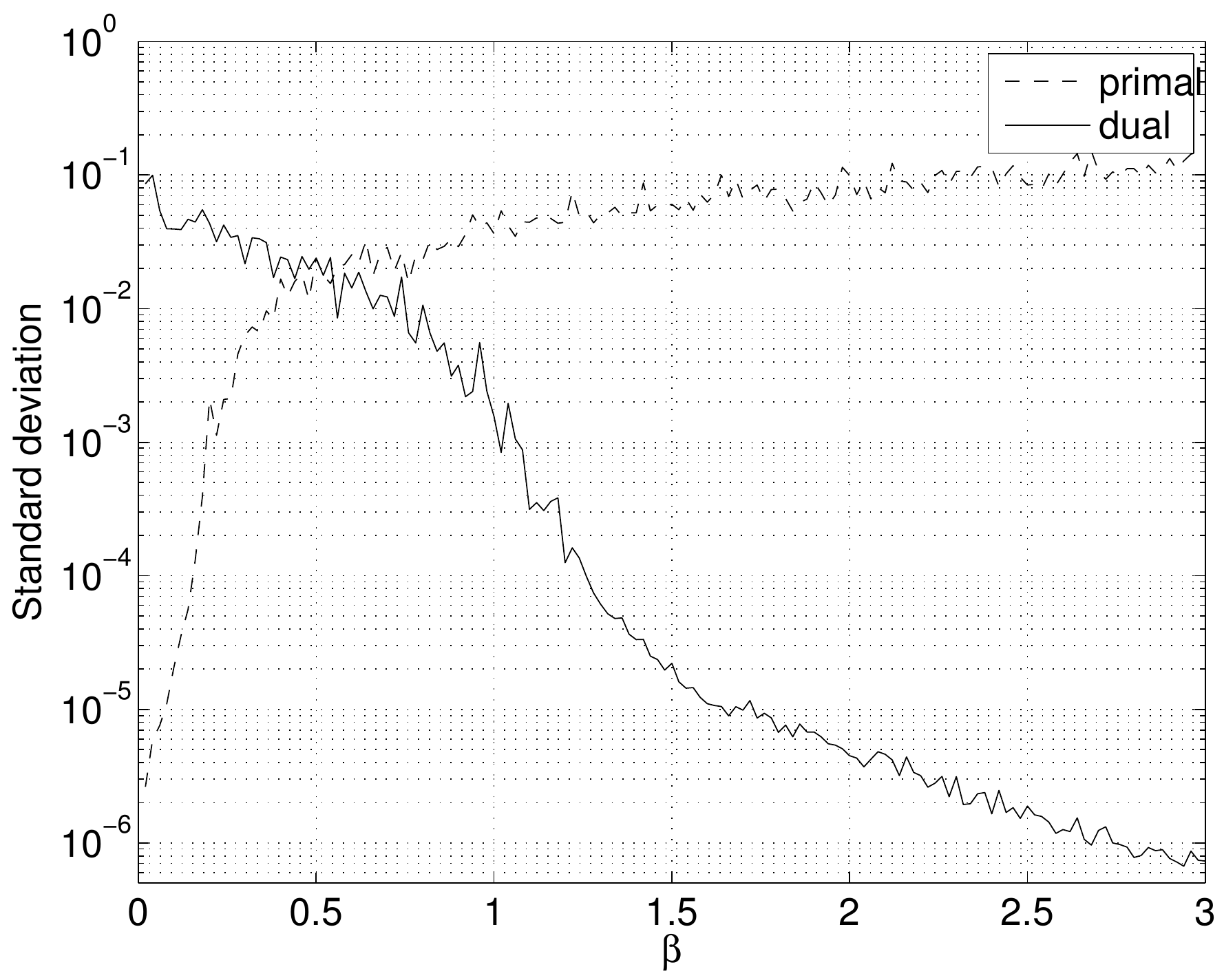}		}
	\subfigure[Clock model.]{
	\includegraphics[scale=\s]{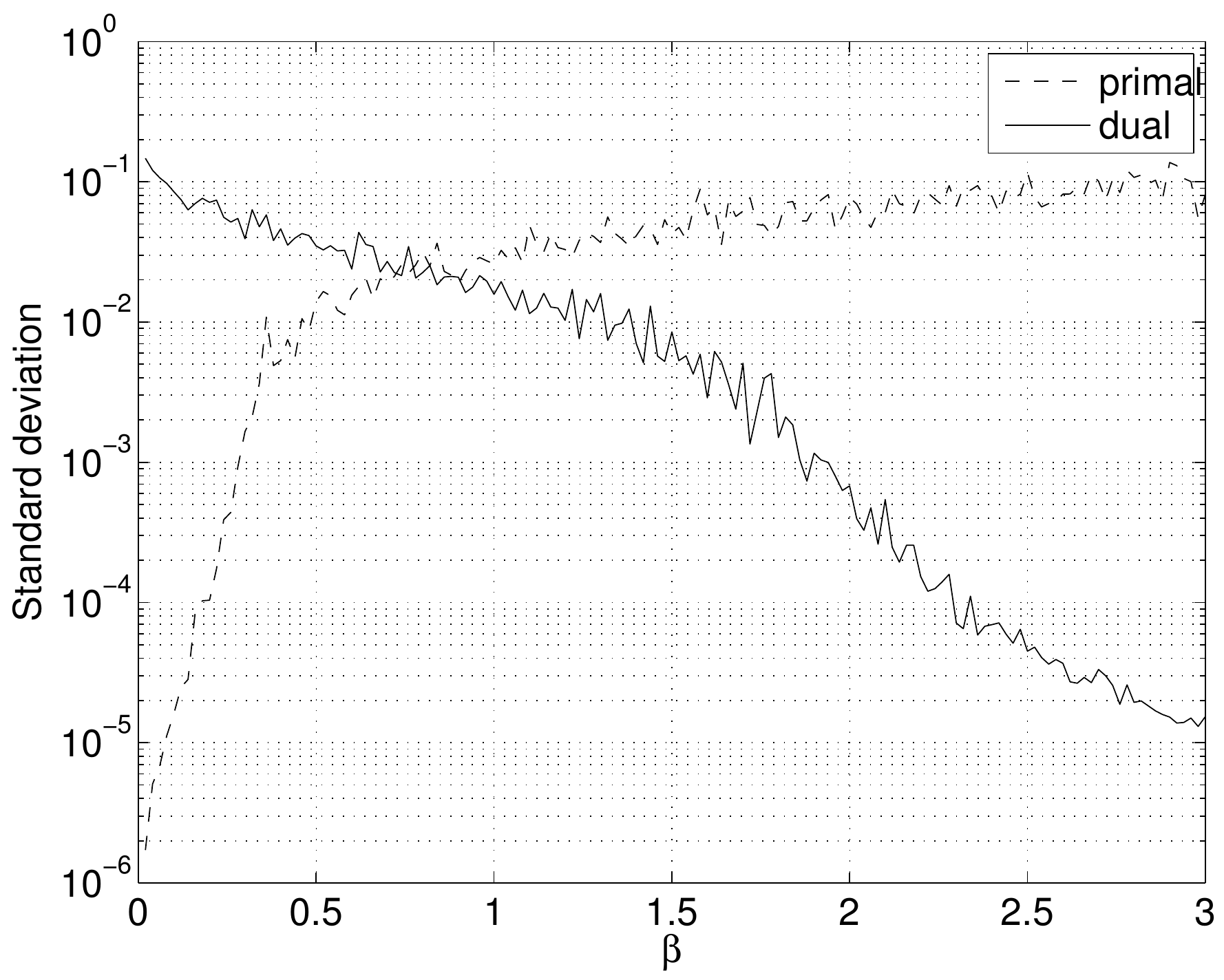}		}
	\caption{Standard deviation of the estimated free energy per site versus $\beta$ using uniform sampling with $M = 10^{6}$.}
	\label{fig:bv}
\end{figure*}


\section{Concluding Remarks: Beyond the Potts model}
\label{sec:conclude}

This paper shows analytically and experimentally that stochastic estimators of partition functions
exhibit opposite trends on NFG representation of a model and its dual. As remarked in 
Section~\ref{sec:converge}, this phenomenon is fundamentally related to a duality between ``nearly concentrated'' and
``nearly uniform'' distribution. This understanding allows the results presented above to extend beyond the Potts models.
In particular, one may 
consider two-dimensional nearest neighbor models whose bivariate local function is of the form
$h(x,x'):=\kappa(x-x')$ for other functions $\kappa$. When both $\kappa$ and $\widehat{\kappa}$ are a non-negative real function, the duality between uniformity and concentratedness is expected to hold and such a phenomenon is expected to occur.
%
%
As an example, consider the ``clock model,'' which is defined in the same way as the Potts model under the choice 
\begin{eqnarray}
	\kappa_{\rm clock}(x) = e^{\beta\cos(2\pi x/q)},
	\label{eq:lee}
\end{eqnarray}
for all $x\in \Z_{q}$.
(Hence, it is within the scope of models of Fig.~\ref{fig:pnfg2}.)
From Lemma~\ref{lemma:clock} below,  $\widehat{\kappa}_{\rm clock}$ is a positive function, and so it is possible to
take the dual NFG route toward estimating its partition function.

Consider for instance the clock model with $q=4$. It is not hard to see that $p_{\G}$ in this case is a ``concentrated''
distribution for low temperatures and an ``almost uniform'' distribution for high temperatures. From this and the fact
that
\begin{eqnarray}
	\widehat{\kappa}_{\rm clock}(x) = \left\{ \begin{array}{ll}
		e^{\beta}+e^{-\beta}+2, & x = 0 \\
		e^{\beta}-e^{-\beta}, & x \in \{1,3\} \\
		e^{\beta}+e^{-\beta}-2, & x = 2, 
	\end{array} \right.
	\label{eq:dlee}
\end{eqnarray}
one may obtain similar results to Propositions~\ref{prop:p} and \ref{prop:d}. Simulation results for this
model are shown in Fig.~\ref{fig:bv}~(b).


\bibliography{/home/bat/Dropbox/bibliographys/Holographic_revised} 

\end{document}